\tikzstyle{arrow} = [thick,->,>=stealth]
\newtheorem{obs}{Observation}
\newtheorem{lem}{Lemma}
\newtheorem{thm}{Theorem}
\title{Intermediate N-Gramming: Deterministic and Fast N-Grams For Large N and Large Datasets}
\author{
    Ryan R. Curtin\textsuperscript{\rm 1},
    Fred Lu\textsuperscript{\rm 1,2},
    Edward Raff\textsuperscript{\rm 2,3},
    Priyanka Ranade\textsuperscript{\rm 4}
}
\begin{document}

\maketitle

\begin{abstract}
The number of $n$-gram features grows exponentially in $n$, making it computationally demanding to compute the most frequent $n$-grams even for $n$ as small as $3$.
Motivated by our production machine learning system built on $n$-gram features, we ask:
is it possible to accurately, deterministically, and quickly recover the top-$k$ most frequent $n$-grams?
We devise a multi-pass algorithm called {\it Intergrams} that constructs candidate $n$-grams from the preceding $(n-1)$-grams.
By designing this algorithm with hardware in mind,
our approach yields more than an order of magnitude speedup (up to 33$\times$!) over the next known fastest algorithm,
even when similar optimization are applied to the other algorithm.
Using the empirical power-law distribution over n-grams, we also provide theory to inform the efficacy of our multi-pass approach.
Our code is available at \small{\tt https://github.com/rcurtin/Intergrams}.
\end{abstract}

\section{Introduction}
\label{sec:intro}

The goal of this work is the problem statement: ``\textit{I should be able to  find the top-$k$ $n$-grams of a dataset as quickly as the disk can give me bytes}''.
If one has a large set of files
(TBs or larger),
and wants to find the $k$ most frequent
subsequences of length $n$ in that set of files (called {\em $n$-grams}),
the claim is that the bottleneck in computation should be
getting the bytes from the files---{\em not} processing them.
This problem is {\em much} harder than it looks from the description,
but it {\em is} possible to satisfy the claim.

The problem is of interest
not because it was lunchtime banter (although it was),
but instead 
because it is an important preprocessing step for
numerous real-world machine learning tasks:
$n$-grams are effective features as inputs to machine learning models.
Once the top-$k$ $n$-grams are known,
it is extremely fast to convert a large set of files into
sparse data that can then be directly used for training---see Fig.~\ref{fig:pipeline}.
But,
for large datasets, $k$, or $n$,
the time to compute the top-$k$ $n$-grams can
significantly exceed all other steps.

Probably the most well-known application of this strategy,
is malware classification:
byte-sequence features are commonly used
to produce simple, interpretable models
that give state-of-the-art performance
and extremely fast inference times~\citep{masud2008scalable, fuyong2017malware,raff2018hash}.
For very large datasets,
they may even be used for distributed training of
models~\citep{lu2025optimizing,lu2024high}.
For text processing,
length-$n$ character sequences
have been shown to outperform text embedding models for (certain) textual event classification~\citep{piskorski2020tf}
and are useful in large-scale language modeling~\citep{liu2024infini}.
Finally, $n$-grams of genomic sequences
(known as $k$-mers)
have proven to be useful for various tasks,
from phylogenetic analysis to species classification~\citep{chor2009genomic,yang2020intrinsic}.

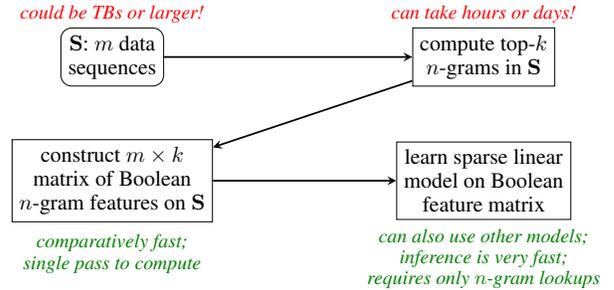
\begin{figure}[t]
\begin{center}
\begin{adjustbox}{width=0.95\columnwidth}
    \begin{tikzpicture}[auto, node distance=4cm,>=latex]

    \tikzstyle{block} = [draw, rectangle, align=center];
    \tikzstyle{rblock}=[draw, shape=rectangle,rounded corners=0.5em, align=center];

    \node[rblock] (data) { $\mathbf{S}$: $m$ data\\ sequences };
    
    \node[block,right of=data, xshift=2cm] (ngram) { compute top-$k$\\ $n$-grams in $\mathbf{S}$ };

    \node[block,below of=data, yshift=2cm] (feature) { construct $m \times k$\\ matrix of Boolean\\ $n$-gram features on $\mathbf{S}$ };

    \node[block,below of=ngram, yshift=2cm] (ml) { learn sparse linear\\ model on Boolean\\ feature matrix };

    \draw [arrow] (data) -- (ngram);
    \draw [arrow] (ngram) -- (feature);
    \draw [arrow] (feature) -- (ml);

    \node [above of=data,yshift=-3.3cm] () { \color{red}{\it \footnotesize could be TBs or larger!} };
    \node [above of=ngram,yshift=-3.3cm] () { \color{red}{\it \footnotesize can take hours or days!} };
    \node [below of=feature,yshift=3.0cm] () { \color{green!50!black}{\it \footnotesize comparatively fast;} };
    \node [below of=feature,yshift=2.65cm] () { \color{green!50!black}{\it \footnotesize single pass to compute} };
    \node [below of=ml,yshift=3.1cm] () { \color{green!50!black}{\it \footnotesize can also use other models;} };
    \node [below of=ml,yshift=2.75cm] () { \color{green!50!black}{\it \footnotesize inference is very fast;} };
    \node [below of=ml,yshift=2.4cm] () { \color{green!50!black}{\it \footnotesize requires only $n$-gram lookups} };
\end{tikzpicture}
\end{adjustbox}
\end{center}
\vspace*{-1.5em}
\caption{A typical machine learning pipeline using $n$-grams as features.
To run this pipeline, the $k$ most common $n$-grams from the input data must be known.
Often, on large datasets, the process of computing the top-$k$ $n$-grams is
far more expensive than the modeling step!}
\label{fig:pipeline}
\end{figure}

\section{Problem Formalization and Notation}
\label{sec:problem}

The problem of computing the top-$k$ $n$-grams is straightforward to formalize:
suppose that there is a dictionary $\mathbf{D}$ of possible $n$-grams;
this dictionary can be extremely large.
If $n$-grams are byte sequences, then $|\mathbf{D}| = 256^n$
(e.g. there are 18 quintillion possible $8$-grams!).
Suppose also we are given a dataset that consists of $m$
sequences
$\mathbf{S} = \{ s_1, s_2, \ldots, s_m \}$.
In a malware detection application, for instance,
$\mathbf{S}$ may be a large collection of files
where each $s_i$ is one file.
Our task, then, is to find one of the following two quantities:

\begin{eqnarray}
k\operatorname{-argmax}_{x \in \mathbf{D}} \sum_{s_i \in \mathbf{S}} \sum_{x_j \in s_i} \mathds{1}(x_j = x), \label{eqn:count_all} \\
k\operatorname{-argmax}_{x \in \mathbf{D}} \sum_{s_i \in \mathbf{S}} \mathds{1}(x \in s_i). \label{eqn:count_one}
\end{eqnarray}

In the first variant (Eq.~\ref{eqn:count_all}),
each $n$-gram is counted once {\em every} time that it appears.
In the second variant (Eq.~\ref{eqn:count_one}),
an $n$-gram is only counted once for each sequence $s_i$ that it appears in.
When using the top-$k$ $n$-grams as features for a machine learning model,
especially on real-world data,
it is generally more effective to count an $n$-gram once per sequence.
This is because it is often the case that a single $n$-gram
may be present in only a very few sequences of $\mathbf{S}$,
but may occur very many times in those few sequences.
If our goal is to predict the label of a sequence,
then these $n$-grams carry little predictive value,
as their feature values will be $0$ for the vast majority of sequences in $\mathbf{S}$.
For more details on this phenomenon,
see the discussion in~\citet{raff2018investigation}.

As a result,
our focus in this paper will be on solving Eq.~\ref{eqn:count_one};
however, our algorithms can be readily adapted to solve Eq.~\ref{eqn:count_all}
(and they will solve that problem more quickly as it is an easier problem).

\section{Related Work}
\label{sec:related}

The naive approach to solving this problem
consists of storing a large dictionary for each element in $\mathbf{D}$
and iterating over each sequence $s_i$,
incrementing the appropriate counts in the dictionary.
Then, when all sequences have been iterated over,
sort the dictionary by count and keep the top $k$.

Such a strategy can be implemented in only a few lines of code;
in fact, this is what is done by the commonly-used
{\tt CountVectorizer} class from the {\tt scikit-learn} package~\citep{pedregosa2011scikit}.
But this obviously scales horrendously for non-trivial sizes of $\mathbf{D}$ or $k$.

A number of stream-based approaches have been developed, e.g.~\citet{charikar2002finding}, \citet{cormode2005summarizing}, and \citet{jin2003dynamically}.
These approaches build a `sketch' of the data as the stream is processed,
discarding $n$-grams that are found to be infrequent.
Although these approaches perform better than the naive strategy described above,
they tend to focus on much smaller $k$ than would be interesting for a machine learning use case,
and require storing significantly more than $k$ partial counts.
For instance, the Space-Saving algorithm of~\citet{metwally2005efficient}
requires a number of counters $M >> k$ for error bounds to be acceptably small.
Further, adapting these algorithms to count an $n$-gram only once per sequence (Eq.~\ref{eqn:count_one})
adds additional overhead.

A more principled approach can be developed by observing that
the $n$-gram distribution of real-world data
is {\it not} uniform
but instead follows a Zipfian (power-law) distribution~\citep{raff2018investigation}.
That is, few $n$-grams appear regularly;
the vast majority of $n$-grams either appear rarely
or not at all
(and as such have no chance of being in the top $k$).

Using this reality, \citet{raff2018hash} proposed the `hash-gramming' approach,
which is the current best-known algorithm for
finding top-$k$ $n$-grams for non-trivial $k$.
This approach computes the hash of each $n$-gram,
mapping it from a space of size $|\mathbf{D}|$
to a more manageable number of elements;
in their implementation, they hash to roughly $2^{31}$ elements.
Processing is done in two passes:
in the first pass,
the hashes of each $n$-gram are counted.
Then, the top-$k$ hashes are computed.
In the second pass, exact $n$-gram values using an approach like the naive dictionary approach,
but {\em only} for $n$-grams whose hashes are in the set of top-$k$ hashes.
This reduces the size of memory needed for the second pass to roughly $O(k)$ memory,
which is far faster and more manageable than $O(|\mathbf{D}|)$ memory!
With a good implementation,
the hash-gramming approach can yield throughput of $10-20$ MB/s;
although this is orders of magnitude faster than both the naive dictionary algorithm
and the Space-Saving algorithm,
it is still far away from the `disk speed' postulated in the introduction
(which would be more like $600$ MB/s for a typical desktop computer with a spinning-disk drive).

\section{Hardware Limitations}
\label{sec:hardware}

Although the algorithms described in the previous section focus on
minimizing overall memory usage
they do not focus on actual performance limitations of physical hardware.
The biggest problem that all of the algorithms described in the previous section face is
{\bf non-contiguous memory access patterns} into {\bf large blocks of memory}.

On a modern computer,
data structures that cannot fit into the processor caches (L1/L2/L3)
are instead stored in RAM.
Each level of this memory hierarchy has orders-of-magnitude different
capacity,
bandwidth,
and latency.
Table~\ref{tab:memory} shows typical figures on modern hardware
as provided by Intel~\cite{intel_memory},
including for disk speed.

\begin{table}[t!]
\begin{center}
\begin{adjustbox}{width=0.99\columnwidth}
\begin{tabular}{cccc}
\toprule
{\bf Hierarchy level} & {\bf Capacity} & {\bf Peak Bandwidth} & {\bf Latency} \\
\midrule
L1 cache        & 16 KB--64 KB  & 1 TB+/s          & $\sim1$ ns \\
L2 cache        & 64 KB--512 KB & 1 TB+/s          & $\sim4$ ns \\
L3 cache        & 8 MB--256 MB  & 100-500 GB/s     & $\sim40$ ns \\
RAM             & 1 GB--4 TB    & 10-100 GB/s      & $\sim80$ ns \\
Disk            & 1 TB+         & 500 MB/s--5 GB/s & $\sim8-80$ $\mu$s \\
\bottomrule
\end{tabular}
\end{adjustbox}
\end{center}
\vspace*{-0.8em}
\caption{Typical performance characteristics of each level of the memory hierarchy on modern computers~\cite{intel_memory}.
Whenever data is too large to fit in a cache level,
access times become significantly longer,
and bandwidth decreases,
both by orders of magnitude.}
\label{tab:memory}
\end{table}

An important reality somewhat masked by Table~\ref{tab:memory} is that
peak bandwidth numbers are for sequential accesses;
random non-sequential access into a block of memory can be orders of magnitude slower.
Thus, for some array {\small \tt a},
a loop that modifies {\small \tt a[0]}, then {\small \tt a[1]}, then {\small \tt a[2]},
and so forth
is much faster than a loop that
modifies random elements of {\small \tt a}.
See Figure~\ref{fig:mem_bw} for a simulation that shows this effect
(imitating the excellent work of~\citet{drepper2007every}).

\begin{figure}[t!]
\begin{adjustbox}{width=0.99\columnwidth}
    \begin{tikzpicture}
\begin{axis}[
      axis lines=left,
      width=0.55\textwidth,
      height=0.26\textwidth,
      ymin=0.1,ymax=100,
      xmin=4096, xmax=68719476736,
      xlabel={Array size},
      xmode=log,
      log basis x={2},
      xtick={4096, 65536, 1048576, 16777216, 268435456, 4294967296, 68719476736},
      xticklabels={4 KB, 64 KB, 1 MB, 16 MB, 256 MB, 4 GB, 64 GB},
      ymode=log,
      log basis y={10},
      yticklabels={0, 100 MB/s, 1 GB/s, 10 GB/s, 100 GB/s},
      legend pos=south west,
      grid=both,grid style={line width=0.6pt, draw=gray!10}
]
    \addplot table [x=bytes, y=seqbw, col sep=comma, color=blue] {data/mem_bw.csv};
    \addlegendentry { Sequential };
    \addplot table [x=bytes, y=rndbw, col sep=comma, color=red] {data/mem_bw.csv};
    \addlegendentry { Random };

    \addplot [name path=sequpper,draw=none] table[x=bytes, y=seqlobw, col sep=comma] {data/mem_bw.csv};
    \addplot [name path=seqlower,draw=none] table[x=bytes, y=seqhibw, col sep=comma] {data/mem_bw.csv};
    \addplot [fill=blue!30] fill between[of=sequpper and seqlower];
    
    \addplot [name path=rndupper,draw=none] table[x=bytes, y=rndlobw, col sep=comma] {data/mem_bw.csv};
    \addplot [name path=rndlower,draw=none] table[x=bytes, y=rndhibw, col sep=comma] {data/mem_bw.csv};
    \addplot [fill=red!30] fill between[of=rndupper and rndlower];
\end{axis}
\end{tikzpicture}
\end{adjustbox}
\vspace*{-1em}
\caption{Memory bandwidth benchmarks.
We vary the size of the array
(x axis),
and in each trial we increment all elements of the array in sequential (blue) or random (red) order.
For sequential access patterns, the hardware prefetcher can keep bandwidth high,
but the 2048-element TLB necessarily starts to miss at 8 MB (we use 4 KB page sizes).
For random access patterns, the hardware prefetcher
cannot predict the memory (or page) that is needed next,
and throughput craters as the array size exceeds the size of each level in the cache hierarchy.}
\label{fig:mem_bw}
\end{figure}
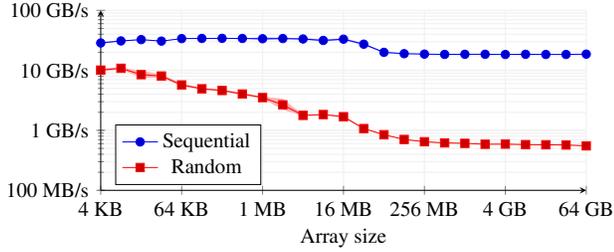

Another important reality of hardware is paging and virtual memory:
when we access a random location in memory,
the operating system must first map the given memory pointer
to the true underlying hardware location
using the virtual memory page tables.
Virtual memory is organized into pages (typically of size 4 KB but this can vary)
and therefore the first step in any virtual memory access
is to get the underlying hardware location of the page.
The processor's translation lookaside buffer (TLB)
assists with this process
by caching entries of recently-used pages.
But the TLB's capacity is limited;
typical modern TLBs may have only hundreds of entries.
Whenever a page is not found in the TLB,
the kernel must perform a time-consuming page walk;
this is seen in Fig.~\ref{fig:mem_bw}.

In order to design a fast algorithm for $n$-gramming,
we must be aware of these hardware concerns.
Similar to the `cache-oblivious' algorithms pioneered by~\citet{frigo1999cache} and~\citet{demaine2002cache},
we will use a simple model of a cache
and then ensure that our algorithm operates in a way that minimizes accesses outside the cache.

\section{Hash-gramming Memory Problems}
\label{sec:hashgramming}

Existing algorithms for fast top-$k$ $n$-gram computation
severely violate the understanding of the memory hierarchy we established above.
The working sets of these algorithms often significantly exceed cache size,
and the memory access patterns are effectively random.
Although the analysis is similar for the Space-Saving algorithm and others,
we focus here on Hash-Gramming,
given in Algorithm~\ref{alg:hash_grams},
as it is the fastest competitor to our algorithm by orders of magnitude
and its shortcomings served heavily as inspiration.

Hash-Gramming operates by taking two passes on the data:
in the first pass,
a hash function is used to map an $n$-gram into one of $B$ buckets.
Then, the top-$k$ buckets are retained,
and in the second pass,
any $n$-grams whose hashes are in the top-$k$ buckets
are counted;
finally, the top-$k$ $n$-grams of the second pass are returned.
Because of the condition that an $n$-gram hashes to the top-$k$ buckets,
and $k \ll B$,
the second pass is significantly faster.

\begin{algorithm}[t]
\small
\begin{center}
\begin{algorithmic}[1]
\Require Bucket size $B$, hash function $h(\cdot)$, sequence set $\mathbf{S}$, $n$, $k$

\medskip

\State $T \gets \mathbf{0}^{B}$
\Comment{First pass: compute $n$-gram hash counts.}
\ForAll{sequence $s_i$ in $\mathbf{S}$ {\bf in parallel}}
    \State $T_i \gets \mathbf{0}^{B}$
    \ForAll{$n$-gram $x_i \in S$}
        \State $T_i[h(x_i)] \gets 1$
    \EndFor
    \State $T \gets T + T_i$
\EndFor

\medskip
\State $H \gets k\operatorname{-argmax}_{h_i \in T} T[h_i]$
\Comment{Top-$k$ hash values.}

\medskip
\State $D \gets \mathrm{empty\ dictionary}$
\Comment{Second pass: compute exact counts of $n$-grams.}
\ForAll{Sequence $s_i$ in $\mathbf{S}$ {\bf in parallel}}
    \State $D_i \gets \mathrm{empty\ dictionary}$
    \ForAll{$n$-gram $x_i \in s_i$}
        \If{$h(x_i) \in H$}
            \State $D_i[x_i] \gets 1$
        \EndIf
    \EndFor
    \State $D \gets D + D_i$
\EndFor

\medskip
\noindent \Return $k\operatorname{-argmax}_{x_i \in D} D[x_i]$ \Comment{Return final top-$k$ $n$-grams.}
\end{algorithmic}
\end{center}
\caption{Hash-Gramming~\citet{Kilograms_2019},
specialized to count an $n$-gram once per sequence.}
\label{alg:hash_grams}
\end{algorithm}

But let us consider the memory access behavior of this algorithm:
at each step in the first pass,
we compute the hash of an $n$-gram, $h(x_i)$.
$h(\cdot)$ must hash $n$-grams randomly to one of $B$ buckets,
but there is no correlation between the buckets that
two consecutive $n$-grams will hash to.

Thus, each update to $T_i$ is effectively random.
Problematically, $B$ is large: typically around $2^{31}$.
This means that if $T_i$ is represented as a bit vector,
its size is $\sim\!\!8$ MB.
Although this fits into most L3 caches in the single-threaded case,
with many threads operating on different sequences,
the sum total of all bit vectors $T_i$
can quickly exceed L3 cache.

Worse, the update step for $T$ also exhibits random access behavior:
because of the Zipfian distribution generally seen in real-world data,
each $T_i$ is highly sparse
(as most $n$-grams never appear).
The throughput for this update step is very low,
because the size of $T$ is typically $8$ or $16$ GB,
depending on the precision used.
Even when $T_i$ is scanned sequentially,
the sparse updates to $T$
cannot benefit from hardware prefetching
and often suffer from page faults.

The second pass also exhibits the same random memory access patterns,
but because the size of $D$ is so much smaller,
and because many $n$-grams are skipped since they are not in $H$,
the observed runtime effect is comparatively less.

Although there are a number of small tricks that can be used to
hide the latency and low bandwidth associated with the hash-gramming approach,
these do not change the fundamental underlying access patterns.
The use of a small LRU cache or Cuckoo hash table~\citep{pagh2004cuckoo}
to cache common recently-seen hash values $h(x_i)$
and avoid lookups in or updates to $T_i$ or flushes to $T$
can be helpful,
but again the Zipfian distribution of data hurts:
due to the long tail of the distribution,
the vast majority of $n$-grams are only seen once or twice
and will not remain in the small cache,
requiring accesses to $T_i$ and $T$.
Software prefetching with a circular buffer presents another alluring acceleration,
but on many processors,
a TLB miss on a prefetch will cause an immediate stall and page walk~\citep{intel_reference_optimization}.
Given the large size of $T$, these TLB misses are almost guaranteed with a page size of $4$ KB (the default).
It is possible to use large pages of $2$ MB on typical Linux systems\footnote{Linux kernel support for $1$ GB pages is available, but not everywhere, and is tedious to reliably build into a distributable software product for a variety of non-user-friendly reasons, so we do not consider it in our work.  As with $2$ MB pages, it would provide a small speedup but not address the underlying issue.},
but while this provides a speedup by reducing the number of TLB misses,
the size of the TLB for $2$ MB pages is generally smaller,
and it does not address the underlying issue.

\section{Warmup: Fast $3$-grams}
\label{sec:threegrams}

To achieve fast speeds for $n$-gramming,
we {\em must} address the memory access patterns
and avoid random accesses into large data structures that
cannot fit in processor caches
and spill across many memory pages.
We have two desiderata:

\begin{enumerate}
    \item If we must do random access into a block of memory, that block of memory should fit in the processor cache.

    \item If a block of memory must be larger than the processor cache, then we must access it sequentially.
\end{enumerate}

When we are passing over all the sequences in $\mathbf{S}$,
we will necessarily be visiting $n$-grams in effectively random order.
Thus, when we take a pass over the data,
we cannot expect to increment counts or any intermediate data structure
in a sequential manner.
To satisfy our first requirement, then,
\textit{our intermediate data structure must fit in the processor cache.}

Consider what would happen if we were only interested in computing the top-$k$ $3$-grams (e.g. $n = 3$).
If we are using byte-sequence data, the number of possible $3$-grams (that is, $|\mathbf{D}|$)
is $256^3 = 16777216$.
Then, for each sequence $s_i$,
we can hold a bit vector with one bit for each of the $16$M possible $3$-grams,
which has overall size $2$ MB: this trivially fits into the processor cache.
Since we cannot assume any ordering of the $n$-grams in the sequences $s_i$,
our memory access pattern into this bit vector is effectively random.

After processing the sequence $s_i$,
we can flush this bit vector to a global array of $16$M elements
using a sequential access pattern.
When using 32-bit integers, this has size $64$ MB, which still fits comfortably in cache (L3).
See Algorithm~\ref{alg:gram3_fast} for a complete description of this algorithm.

\begin{algorithm}[t]
\small
\begin{center}
\begin{algorithmic}[1]
\Require sequence set $\mathbf{S}$, $k$

\medskip

\State $C \gets \mathbf{0}^{16777216}$
\Comment{Size: 64 MB.}
\ForAll{sequence $s_i$ in $\mathbf{S}$ {\bf in parallel}}
    \State $C_i \gets \mathbf{0}^{16777216}$
    \Comment{Bit vector of size 2 MB.}
    \ForAll{$3$-gram $x \in S_i$}
        \State $C_i[x] \gets 1$
        \Comment{Random access into bit vector.}
    \EndFor
    \State $C \gets C + C_i$ \Comment{Sequential flush.}
\EndFor

\medskip
\noindent \State {\bf return} $k\operatorname{-argmax}_{x_i \in C} C[x_i]$
\Comment{Compute top-$k$ $3$-grams.}
\end{algorithmic}
\end{center}\caption{A fast algorithm to compute the top-$k$ $3$-grams on byte sequences, counting an $n$-gram once per sequence.}
\label{alg:gram3_fast}
\end{algorithm}

This algorithm is embarrassingly parallel;
each sequence can be handled by a different processor,
with the only additional memory requirement that each processor needs its own $2$ MB bit vector.
For most processors, the sum total of memory required still fits easily in cache.
On modern commodity hardware,
this algorithm can
compute $n$-grams as quickly as input sequences can be provided from disk.

There are additional optimizations we can make.
First, we can collect bit vectors $C_i$ to flush simultaneously:
instead of flushing a $C_i$ as soon as a sequence is complete,
we wait until a specific number of bit vectors are ready, and then flush them all simultaneously:
$C \gets C + C_{i} + C_{i + 1} + \ldots + C_{i + 8}$.
Second, we can accelerate each individual flush via the use of SIMD instructions.
Processors that support AVX2 are able to simultaneously add 8 integers in a single instruction.
In our tuned implementation, we are able to increment $8$ elements of $C$ for a single bit vector $C_i$ with
only 5 SIMD instructions.
On the powerful system we use for our experiments, this achieves peak disk bandwidth ($\sim\!5$ GB/s!).

\begin{figure}[b!]
\begin{center}
\begin{tikzpicture}
\begin{axis}[
      axis lines=left,
      width=0.45\textwidth,
      height=0.2\textwidth,
      ymin=95,ymax=100,
      xmin=1.0, xmax=3.0,
      xlabel={Oversample factor $z$},
      yticklabel={$\pgfmathprintnumber{\tick}$\%},
      grid=both,grid style={line width=0.6pt, draw=gray!10}
]

    \addplot table [x=z, y=pct, col sep=comma, color=red, mark=none] {data/ember2018_top_k_pct_10k.csv};
    \addlegendentry { k=$10$k };
    
    \addplot table [x=z, y=pct, col sep=comma, color=blue, mark=none] {data/ember2018_top_k_pct_100k.csv};
    \addlegendentry { k=$100$k };
    
    \addplot table [x=z, y=pct, col sep=comma, color=green, mark=none] {data/ember2018_top_k_pct_1M.csv};
    \addlegendentry { k=$1$M };
\end{axis}
\end{tikzpicture}
\end{center}
\vspace*{-0.8em}
\caption{Percentage of top-$k$ 4-grams that have 3-byte prefixes in the top-$zk$ 3-grams on the EMBER dataset,
as a function of $z$.
100\% means that {\em all} 4-grams have 3-byte prefixes in the top-$zk$ 3-grams;
in this case, we can filter the set of possible 4-grams significantly,
with no loss of accuracy!}
\label{fig:topk-pct}
\end{figure}
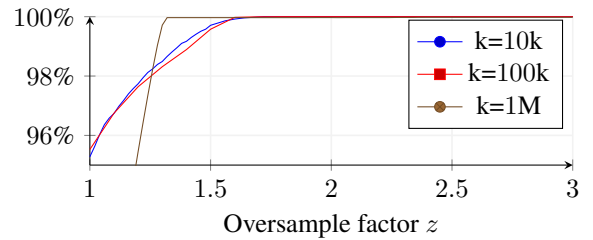

Now, what utility do $3$-grams have when we are interested in general $n$-grams for $n > 3$?
Intuitively, $3$-grams carry information about larger $n$-grams:
the count of a single $3$-gram in $\mathbf{S}$ is the
{\em sum} of counts for all $n$-grams with that $3$-gram as the prefix.
So, for instance, if a $3$-gram $x_i$ is in the top-$k$ $3$-grams,
then it is intuitively likely that some $4$-gram whose prefix is $x_i$
will {\bf also} be a top-$k$ $4$-gram.

Empirically validating our intuition is easy:
on a dataset of 18GB of executable programs,
we compute true $3$-grams and true $4$-grams.
For a few values of $k$, we plot
the percentage of $4$-grams that are in the top-$zk$ $3$-grams,
where $z$ is a constant `oversampling' factor.
The results are shown in Fig.~\ref{fig:topk-pct}.
The figure's implication is clear:
if we compute the top-$zk$ $n$-grams,
then we can compute the top-$k$ $(n+1)$-grams
using only the top-$zk$ $n$-grams as prefixes,
which significantly reduces the size of the $n$-grams that must be counted.

\section{Intergrams: Intermediate $n$-gramming}
\label{sec:intergrams}

We can use the $3$-gram algorithm of the previous section as a
starting point for a fast $n$-gramming algorithm
that we call {\bf Intergrams},
following the naming of the programming language INTERCAL~\citep{woods1973intercal}.
The strategy is simple:
suppose that we first computed the top-$zk$ $3$-grams.
Then, we would compute the counts of all $4$-grams whose $3$-gram prefixes were
in the top-$zk$ $3$-grams.
Following this, we would compute the top-$zk$ $4$-grams and
repeat the process for $5$-grams, $6$-grams, etc.
This turns out to be a very effective strategy that
meets our two hardware-based desiderata,
and the strategy is formalized in Algorithm~\ref{alg:intergrams}.

\begin{algorithm}[t]
\small
\begin{center}
\begin{algorithmic}[1]
\Require sequence set $\mathbf{S}$, $n$, $k$

\medskip

\State $P^{(3)} \gets \mathrm{top-}zk\; 3\mathrm{-grams}$ using Algorithm~\ref{alg:gram3_fast}
\ForAll{$j \in [4, n]$}
    \State $C \gets \mathbf{0}^{256zk}$
    \ForAll{sequence $S_i$ in $\mathbf{S}$ {\bf in parallel}}
        \State $C_i \gets \mathbf{0}^{256zk}$
        \ForAll{$j$-gram $x \in S_i$}
            \If{$x$ has $(j - 1)$-gram prefix in $P^{(j - 1)}$}
                \State $C_i[x] \gets 1$
                \Comment{Random access into bit vector.}
            \EndIf
        \EndFor
        \State $C \gets C + C_i$
        \Comment{Sequential flush.}
    \EndFor

    \medskip
    \If{$j < n$}
        \State $P^{(j)} \gets zk\operatorname{-argmax}_{x \in C} C[x]$
        \Comment{Compute prefixes for the next round.}
    \EndIf
\EndFor

\noindent \State {\bf return} $k\operatorname{-argmax}_{x \in C} C[x]$
\end{algorithmic}
\end{center}\caption{{\bf Intergrams}: a fast algorithm to compute the top-$k$ $n$-grams on byte sequences,
using $c_{\mathds{1}}(\cdot, \cdot)$ (e.g. counting an $n$-gram once per sequence only).}
\label{alg:intergrams}
\end{algorithm}

Importantly,
at each stage of the algorithm we keep only $zk$ prefixes;
this means in the next pass we only count $256zk$ $n$-grams.
The size of each bit vector $C_i$ will now be $32zk$ bytes;
this still trivially fits in most processor caches for even large values of $k$ and $z$,
and thus the random access pattern into $C_i$ is
not a problem on subsequent passes.

When $zk$ is less than $65536$,
the memory required for the next iteration's counts array $C$ is less than
the $64$ MB needed for $3$-gram computation,
meaning that it too comfortably fits into the processor cache.
On some systems, $zk$ can be significantly larger before overflowing the cache.
Even if $C$ does not fit fully into the cache,
the sequential access pattern of our flushes will still ensure
high memory bandwidth.

It is worth considering the operation of determining whether a prefix is in $P$.
Of course,
simple iteration over all prefixes in $P$ is very inefficient;
similarly,
we found that even binary search over a pre-sorted $P$ was too slow.
Instead,
a trie structure~\citep{fredkin1960trie} should be used for
fast lookup of a prefix's membership in $P$.
Tries are best described visually: Fig.~\ref{fig:trie} gives an example on byte sequence data.

\begin{figure}[t!]
\begin{center}
\begin{adjustbox}{width=0.75\columnwidth}
\begin{tikzpicture}[auto,>=latex]

    \tikzstyle{rblock}=[draw, shape=rectangle,rounded corners=0.2em, align=center];

    \node[rblock] (root) at (0.0, 0.0) { {\it root} };

    \node[rblock] (l1_x00) at (2.5, 0.0) { {\tt 0x00} };
    \node[rblock] (l1_xff) at (2.5, -3.0) { {\tt 0xff} };

    \node[rblock] (l2_x00) at (5.0, 0.0) { {\tt 0x00} };
    \node[rblock] (l2_x0f) at (5.0, -1.2) { {\tt 0x0f} };
    \node[rblock] (l2_xff) at (5.0, -2.4) { {\tt 0xff} };

    \node[rblock] (l21_x00) at (5.0, -3.0) { {\tt 0x00} };

    \node[rblock] (l3_x00) at (7.5, 0.0) { {\tt 0x00} };
    \node[rblock] (l3_x01) at (7.5, -0.6) { {\tt 0x01} };
    
    \node[rblock] (l3_x11) at (7.5, -1.2) { {\tt 0x11} };
    \node[rblock] (l3_x80) at (7.5, -1.8) { {\tt 0x80} };
    
    \node[rblock] (l3_xff) at (7.5, -2.4) { {\tt 0xff} };

    \node[rblock] (l3_xfe) at (7.5, -3.0) { {\tt 0xfe} };

    \draw [arrow,->] (root.east) to [out=0,in=180] (l1_x00.west);
    \draw [arrow,->] (root.east) to [out=0,in=180] (l1_xff.west);

    \draw [arrow] (l1_x00.east) to [out=0,in=180] (l2_x00.west);
    \draw [arrow] (l1_x00.east) to [out=0,in=180] (l2_x0f.west);
    \draw [arrow] (l1_x00.east) to [out=-10,in=180] (l2_xff.west);

    \draw [arrow] (l1_xff.east) to [out=0,in=180] (l21_x00.west);

    \draw [arrow] (l2_x00.east) to [out=0,in=180] (l3_x00.west);
    \draw [arrow] (l2_x00.east) to [out=0,in=180] (l3_x01.west);

    \draw [arrow] (l2_x0f.east) to [out=0,in=180] (l3_x11.west);
    \draw [arrow] (l2_x0f.east) to [out=0,in=180] (l3_x80.west);

    \draw [arrow] (l2_xff.east) to [out=0,in=180] (l3_xff.west);

    \draw [arrow] (l21_x00.east) to [out=0,in=180] (l3_xfe.west);
\end{tikzpicture}
\end{adjustbox}
\end{center}
\vspace*{-1em}
\caption{An example trie built on length-3 byte sequence data.
Determining whether a prefix $x$ is in $P$
can be done by iterating from the root of the trie
to a leaf,
considering each element of the prefix in succession.
In this trie, we can determine that the byte sequence {\small \tt 0x013300} is not in $P$
at the first iteration.
Other sequences, like {\small \tt 0x000aaf} and {\small \tt 0xff00ff} require more iterations;
sequences like {\small \tt 0x000000} are in $P$ and will reach a leaf during iteration of the trie.}
\label{fig:trie}
\end{figure}
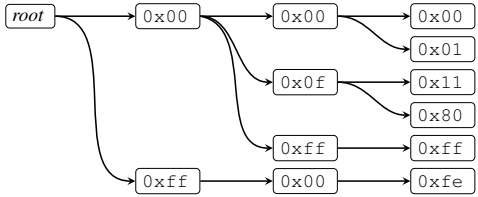

As lookups in this trie are the innermost computation of our entire algorithm,
performance is paramount,
and for this we must again consider the hardware.
But we cannot make lookups sequential in a trie;
we will be jumping from node to node depending on the prefix we are looking up.
So our memory access pattern will appear more random than sequential.
Thus, keeping the size of the trie data structure small is very important;
if it is too large, it will not fit in processor cache
or it will force parts of $C$ or $C_i$ to be evicted from cache.
For byte sequence data,
each trie node can have a maximum of 256 children;
so we can use an 8-bit unsigned integer to track the number of children of a node.
Each node must also hold a value;
an 8-bit unsigned integer suffices.

During the lookup process,
we must find the correct child of the trie to visit
(or if there is no child to visit next, then we terminate early).
To prevent a potentially long iteration over children when a node has many children,
we can use a lookup table.
For byte data, this table needs only 256 elements.
We found lookup tables to be faster
despite their increased memory usage
for nodes with as few as 4 children.

Lastly,
to reduce the number of `faraway' memory accesses,
the memory layout of the trie's children can be ordered by frequency.
For example, if the prefix {\small \tt 0xBEEF00}
is the most common prefix in $\mathbf{S}$,
then the trie nodes corresponding to {\small \tt 0xBE}, {\small \tt 0xBEEF}, and {\small \tt 0xBEEF00}
can be laid out contiguously.
The second-most common child of {\small \tt 0xBEEF}
can then be placed directly after {\small \tt 0xBEEF00},
and so forth.
This means that the most commonly seen prefixes
are more likely to have their trie elements already in the cache,
and less likely to produce cache misses and page faults.

In our implementation,
we were able to restrict the size of the trie to
only $2$ bytes per trie node
plus the memory required for holding the locations of each trie's children
(up to $12$ bytes if the node has less than four children,
and $1$ KB for the lookup table otherwise;
most nodes do not have lookup tables).
The number of nodes in a trie is linear in the number of prefixes,
so with only $zk$ prefixes,
the overall size of the trie is small compared to $C$.
With these optimizations, overall Intergrams runtime was reduced by about 10--15\%.

\section{Theory: Intermediate Filter Recall}
\label{sec:theory}

Next
we aim to validate our claim that the top-$zk$ $n$-grams can effectively filter for the
top-$k$ $(n + 1)$-grams.
For this analysis,
we assume that $n$-grams and $(n+1)$-grams over $\mathcal{S}$ follow a Zipfian distribution with parameter $a$.
That is,
if the $n$-grams $x_1, \ldots, x_{|\mathbf{D}|}$ are ordered by decreasing probability
$p_1, \ldots, p_{|\mathbf{D}|}$, then $p_i \propto \frac{1}{i^a}$.
This assumption is actually pessimistic;
in general, empirical data is such that the parameter $a$ is
greater for $(n+1)$-grams than for $n$-grams.

Although Intergrams does not explicitly search for the exact top-$k$ $(n+1)$-grams over $\mathcal{S}$,
we are able to show that with high probability,
most of the probability mass of the true top-$k$ $(n+1)$-grams is retained.
To show this we first consider the ideal case where the observed $n$-gram counts exactly match the underlying Zipfian frequencies (e.g., no sampling noise).
We will use the following facts about the Zipf distribution.
\begin{obs} \label{obs:zipf}
    Let $M_k \coloneqq \sum_{i=1}^k \frac{1}{i^a}$, and let $|\mathbf{D}|$ be the maximum number of distinct $n$-grams.
    \begin{enumerate}
        \item The probability of sampling $n$-gram $x_i$ is $p_i = \frac{i^{-a}}{M_{|\mathbf{D}|}}$.
        \item The probability of an independently sampled $n$-gram being in the top-$k$ most likely $n$-grams is $M_k/M_{|\mathbf{D}|}$.
    \end{enumerate}
\end{obs}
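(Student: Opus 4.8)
The plan is to derive both facts directly from the definition of the Zipfian law together with the requirement that a probability distribution sums to one. First I would recall what ``Zipfian with parameter $a$'' means: the unnormalized weight assigned to the $i$-th most frequent $n$-gram is $i^{-a}$, so I may write $p_i = c \cdot i^{-a}$ for some constant $c > 0$ to be determined. The normalization constraint $\sum_{i=1}^{|\mathbf{D}|} p_i = 1$ then forces $c \sum_{i=1}^{|\mathbf{D}|} i^{-a} = 1$, i.e.\ $c = 1 / M_{|\mathbf{D}|}$ by the definition $M_k = \sum_{i=1}^{k} i^{-a}$. Substituting back gives $p_i = i^{-a} / M_{|\mathbf{D}|}$, which is the first claim.

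For the second claim I would observe that, since the $x_i$ are indexed in order of decreasing probability, the event ``an independently sampled $n$-gram lies among the top-$k$ most likely $n$-grams'' is exactly the disjoint union of the singleton events $\{x_1\}, \ldots, \{x_k\}$. Countable additivity of the probability measure, together with the first claim, then yields $\Pr[\text{sample} \in \text{top-}k] = \sum_{i=1}^{k} p_i = \frac{1}{M_{|\mathbf{D}|}} \sum_{i=1}^{k} i^{-a} = \frac{M_k}{M_{|\mathbf{D}|}}$, using the definition of $M_k$ once more.

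There is no real obstacle here; the observation is a bookkeeping statement identifying the normalization constant of the Zipf law, and the only point requiring a little care is that the sums are truncated at $|\mathbf{D}|$ rather than extended to infinity, so $M_{|\mathbf{D}|}$ is automatically finite and no convergence condition on $a$ is needed. I would isolate it as an observation precisely so that the subsequent recall arguments --- bounding how much Zipfian mass the top-$zk$ prefixes capture relative to the top-$k$ --- can freely invoke the quantities $M_k / M_{|\mathbf{D}|}$ and ratios such as $M_{zk} / M_k$ without re-deriving them.
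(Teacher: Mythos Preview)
Your derivation is correct and is precisely the standard one-line argument: fix the normalizing constant from $\sum_i p_i = 1$, then sum the first $k$ terms. The paper itself does not supply a proof for this observation; it is stated as a fact following directly from the definition of the truncated Zipf law, which is exactly what you unpack.
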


We observe that for each sequence $s_i \in \mathbf{S}$, each $n$-gram (except for the final one) is the prefix of an $(n+1)$-gram.
Given $|\mathbf{S}|=m$ such sequences and $N$ collected $n$-grams across $\mathbf{S}$, then these are the prefixes of $N-m$ corresponding $(n+1)$-grams.
By this reasoning,
if the top-$zk$ $n$-grams account for some proportion $\beta$ of all $n$-grams, then the proportion of $(n+1)$-grams prefixed by a top-$zk$ $n$-gram is roughly $\beta$ as well:

\begin{restatable}{lem}{betalemma}
\label{lem:beta_to_beta'}
    Suppose the top-$k'$ $n$-grams account for proportion $\beta$ of the total $n$-grams over dataset $\mathbf{S}$.
    Then the $(n+1)$-grams prefixed by one of these top-$k'$ $n$-grams account for at least $\beta' \coloneqq \beta - \frac{m}{N-m}$ of the total $(n+1)$-grams,
    with $N$ the total $n$-grams over $\mathbf{S}$ and $m$ the number of sequences.
\end{restatable}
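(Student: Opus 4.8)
The plan is to prove Lemma~\ref{lem:beta_to_beta'} by a direct counting argument that never invokes the Zipfian assumption; the statement is purely combinatorial, so Observation~\ref{obs:zipf} is not needed. Throughout I interpret ``total $n$-grams'' as the number of $n$-gram \emph{occurrences} across $\mathbf{S}$ counted with multiplicity (consistent with the sampling view used just above the lemma), so that ``the top-$k'$ $n$-grams account for proportion $\beta$'' means the combined occurrence count of those $k'$ $n$-grams is exactly $\beta N$.

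First I would set up the occurrence-level correspondence between $(n+1)$-grams and ``non-terminal'' $n$-grams. For a sequence $s_i$ of length $\ell_i$, its $n$-gram occurrences sit at positions $1,\dots,\ell_i-n+1$ and its $(n+1)$-gram occurrences at positions $1,\dots,\ell_i-n$, and the $(n+1)$-gram occurrence at position $j$ has as its $n$-byte prefix exactly the $n$-gram occurrence at position $j$. Hence the $(n+1)$-gram occurrences are in bijection with the $n$-gram occurrences that are \emph{not} the final one ($j=\ell_i-n+1$) in their sequence; summing over $i$ discards one terminal $n$-gram per sequence and recovers the total count $N-m$.

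Next I would push the proportion through this bijection. Of the $\beta N$ occurrences of top-$k'$ $n$-grams, at most $m$ are terminal, since there is only one terminal position per sequence; every non-terminal one corresponds, under the bijection, to an $(n+1)$-gram occurrence whose $n$-byte prefix is a top-$k'$ $n$-gram. Therefore at least $\beta N - m$ of the $N-m$ total $(n+1)$-gram occurrences are prefixed by a top-$k'$ $n$-gram, a proportion of at least $(\beta N - m)/(N-m)$. To finish I would note $\beta N - m \ge \beta(N-m) - m$ (using $\beta m \ge 0$), so dividing by $N-m>0$ gives $\beta - \tfrac{m}{N-m} = \beta'$, as claimed.

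The only real subtlety—the ``hard part,'' though it is mild—is the boundary bookkeeping: that exactly $m$ $n$-gram occurrences are lost at sequence ends, that at most (not exactly) $m$ of the top-$k'$ occurrences lie among them, and that the closing inequality points in the direction that keeps $\beta'$ a valid lower bound rather than an upper bound. I would also flag explicitly that this is an \emph{occurrence-level} statement: when the algorithm instead counts an $n$-gram once per sequence (Eq.~\ref{eqn:count_one}), the identity ``number of $(n+1)$-grams $=N-m$'' no longer holds exactly, so a short remark clarifying that the analysis here models the count-all regime (with the per-sequence case differing only in lower-order terms) is worth including.
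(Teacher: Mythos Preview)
Your argument is correct and matches the paper's own proof essentially line for line: both set up the correspondence between $(n+1)$-gram occurrences and non-terminal $n$-gram occurrences, observe that at most $m$ of the $\beta N$ top-$k'$ occurrences can be terminal, and conclude the proportion is at least $(\beta N - m)/(N-m) \ge \beta - m/(N-m)$. You are in fact more explicit than the paper in spelling out the bijection and the closing inequality, and your remark flagging that the analysis is at the occurrence level (Eq.~\ref{eqn:count_all}) rather than the per-sequence level (Eq.~\ref{eqn:count_one}) is a worthwhile addition the paper omits.
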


In the worst case scenario for Intergrams, the top-$zk$ $n$-grams prefix only trivial $(n+1)$-grams, dispersing the mass $\beta$ among less frequent $(n+1)$-grams.
Yet, if $\beta$ is large enough, then even in this adversarial situation we may still collect most frequent $(n+1)$-grams.
Let $X^{n}_k$ be the set of $(n+1)$-grams prefixed by a top-$zk$ $n$-gram.
Our goal is to lower bound the occurrence of $(n+1)$-grams within $X^{n}_k$ that are also among the top-$k$ $(n+1)$-grams.

Specifically, let $u$ be the index of the most frequent $(n+1)$-gram within $X^{n}_k$.
Then the probability mass between $x_u$ and $x_{k}$, 
the $k$-th most frequent $(n+1)$-gram,
lower bounds the relevant mass of $(n+1)$-grams that Intergrams will retain in the intermediate round.

\begin{restatable}{thm}{probthm}
\label{thm:prob_n+1}
    Suppose that $n$-grams and $(n+1)$-grams exactly follow the Zipf distribution (no sampling noise) with parameter $a\neq 1$.\footnote{This can be re-derived for $a=1$ and is conceptually the same.}
    If Intergrams keeps the top $k'$ $n$-grams from the previous pass,
    then (1) the most frequent $(n+1)$-gram it finds will be at least the $u$-th most frequent actual $(n+1)$-gram, where 
    \begin{equation} \label{eq:u_bound}
        u \leq \Big(\big(|\mathbf{D}_{n+1}|^{1-a} - a\big)(1-\beta') + 1\Big)^{\frac{1}{1-a}} - 1.
    \end{equation}

    (2) Furthermore, the $j$-th most frequent $(n+1)$-gram found will be at least the $(u + j - 1)$-th most frequent. 
    
    (3) Finally, the fraction of top-$k$ $(n+1)$-grams recalled by Intergrams is at least
    \begin{equation} \label{eq:prob_frac_bound}
        1 - \frac{(|\mathbf{D}_{n+1}|^{1-a} - a)(1-\beta') - a}{(k + 1)^{1-a} - 1}.
    \end{equation}
\end{restatable}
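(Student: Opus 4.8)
The plan is to exploit the single structural fact that defines $u$ — every $(n+1)$-gram strictly more frequent than $x_u$ is unreachable by Intergrams — together with Lemma~\ref{lem:beta_to_beta'} and tight integral estimates of the partial Zipf sums $M_m=\sum_{i=1}^m i^{-a}$. First I would set $D\coloneqq|\mathbf{D}_{n+1}|$ and let $x_1,x_2,\dots$ be the $(n+1)$-grams in decreasing frequency, so that $p^{(n+1)}_i=i^{-a}/M_D$ by Observation~\ref{obs:zipf}. Since $u$ is the index of the most frequent $(n+1)$-gram in $X^{n}_k$, the grams $x_1,\dots,x_{u-1}$ all lie outside $X^{n}_k$, and by Lemma~\ref{lem:beta_to_beta'} the total mass outside $X^{n}_k$ is at most $1-\beta'$, so
\[
\frac{M_{u-1}}{M_D}\;=\;\sum_{i=1}^{u-1}p^{(n+1)}_i\;\le\;1-\beta'.
\]
For part~(1) I would then sandwich both sides using the integral comparison valid for $a\neq 1$ (and $a>0$),
\[
\frac{(m+1)^{1-a}-1}{1-a}\;\le\;M_m\;\le\;\frac{m^{1-a}-a}{1-a},
\]
applying the lower bound to $M_{u-1}$ (with $m=u-1$) and the upper bound to $M_D$. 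Substituting these into $M_{u-1}\le(1-\beta')M_D$, clearing the factor $1-a$ — whose sign flips the inequality when $a>1$ but simultaneously reverses the monotonicity of $t\mapsto t^{1/(1-a)}$, so the final bound is identical in both regimes — and solving for $u$ yields Equation~\eqref{eq:u_bound}.

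For parts~(2) and~(3) I would work in the pessimistic configuration that saturates the part~(1) bound, namely the one in which the unreachable $(n+1)$-grams are exactly the most frequent ones, so that $X^{n}_k=\{x_u,x_{u+1},x_{u+2},\dots\}$; this is the worst case the theorem describes. Then the reachable grams listed by decreasing frequency are $x_u,x_{u+1},x_{u+2},\dots$, so the $j$-th most frequent $(n+1)$-gram Intergrams finds is $x_{u+j-1}$, which is the $(u+j-1)$-th most frequent overall, giving part~(2). For part~(3), Intergrams returns the top-$k$ of the reachable grams, so (when $u-1\le k$, the only regime in which the claimed bound is non-vacuous) exactly $u-1$ of the true top-$k$ $(n+1)$-grams are missed and the recall equals $1-\tfrac{u-1}{k}$. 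Because $p^{(n+1)}_i$ is non-increasing, an initial block of $u-1$ grams carries at least the proportional share $\tfrac{u-1}{k}$ of $M_k$, so
\[
\frac{u-1}{k}\;\le\;\frac{M_{u-1}}{M_k}\;\le\;\frac{(1-\beta')M_D}{M_k}.
\]
Substituting the upper bound for $M_D$ and the lower bound for $M_k$ from the same integral estimates and simplifying then produces Equation~\eqref{eq:prob_frac_bound}.

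The step I expect to be the main obstacle is the integral bookkeeping required to land the \emph{exact} closed forms: one has to choose the sandwiching integrals so that the off-by-one shifts ($u$ versus $u-1$, $k$ versus $k+1$) and the additive constants $a$ and $1$ appearing in \eqref{eq:u_bound} and \eqref{eq:prob_frac_bound} come out precisely, and one has to check that the whole inequality chain holds uniformly for $a<1$ and $a>1$, where dividing by $1-a$ reverses inequalities. A secondary subtlety worth stating explicitly is the reduction behind parts~(2)–(3): the recall bound is proved for the configuration in which the missed $(n+1)$-grams form the initial segment $\{x_1,\dots,x_{u-1}\}$, which is exactly the configuration that makes the mass inequality $M_{u-1}\le(1-\beta')M_D$ underlying part~(1) tight; this is the sense in which the theorem is pessimistic rather than fully adversarial.
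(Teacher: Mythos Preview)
Your approach matches the paper's in structure: obtain a bound of the form $M_{u}/M_{|\mathbf{D}_{n+1}|}\le 1-\beta'$ from Lemma~\ref{lem:beta_to_beta'}, sandwich with the integral estimates of Lemma~\ref{lem:integral_bound}, invert for part~(1), and argue the worst case is the contiguous tail $\{x_u,x_{u+1},\dots\}$ for parts~(2)--(3). Two small divergences are worth naming. First, you work with $M_{u-1}$ (arguably more carefully, since $x_u$ itself is reachable), whereas the paper writes the key inequality with $M_u$; it is the paper's index choice that, after applying the lower integral bound $M_u\ge\frac{(u+1)^{1-a}-1}{1-a}$, produces the trailing $-1$ in Equation~\eqref{eq:u_bound}, so your route lands on the same expression without that final $-1$. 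Second, for part~(3) you read ``fraction recalled'' as the \emph{count} fraction $1-(u-1)/k$ and then pass through the monotonicity step $(u-1)/k\le M_{u-1}/M_k$ before invoking $M_{u-1}\le(1-\beta')M_{|\mathbf{D}_{n+1}|}$ directly; the paper instead treats the quantity as the \emph{mass} fraction $1-M_u/M_k$, applies the upper integral bound $M_u\le\frac{u^{1-a}-a}{1-a}$ (this is where the $-a$ in the numerator of Equation~\eqref{eq:prob_frac_bound} originates), and only then substitutes the part-(1) bound on $u$. Your path is sound but drops that $-a$, yielding a slightly weaker constant than stated. Your own caveat about off-by-one bookkeeping being the main obstacle is therefore exactly right: both discrepancies are index-level, not structural.
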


Intuitively, the faster the tail of the distribution decays, the higher proportion the top-$k$ represents.
When $a$ is not too tiny, the top-$k$ $n$-grams can represent a majority of observed counts,
which forces the majority of $(n+1)$-grams to come from these prefixes.
For larger $a$, such as $a>1$, the bound Eq.~\ref{eq:prob_frac_bound} can be algebraically inverted to yield
asymptotic behavior of the form $1 - \mathcal{O}((k / |D_{n+1}|)^{a - 1})$,
which quickly converges toward successful results for Intergrams.

Now, consider the case where sampling error is present:
henceforth, the counts over $s_i\sim \mathcal{S}$ follow a multinomial generation model with $N$ observations and parameters $\{p_i\}_{i=1}^{|\mathbf{D}|}$,
with empirical probabilities $\hat{p}_i \coloneqq c_i / N$.
While w.l.o.g. the true probabilities follow $p_1 \geq \ldots \geq p_{|\mathbf{D}|}$,
we use the notation $p_{(i)}$ to order by empirical probabilities, so that
$\hat{p}_{(1)} \geq \ldots \geq\hat{p}_{(|\mathbf{D}|)}$. 
Although easy concentration bounds exist when the top-$k$ bins are fixed,
in our case the empirical top-$k$ bins are random.
Even so we may establish bounds.

\begin{restatable}{lem}{concentrationlem}
\label{lem:mult_concentration}
 Choose any $\delta > 0$; with probability at least $1 - \delta$,
\begin{equation}
    \Big| \sum_{i=1}^k \hat{p}_{(i)} - \sum_{i=1}^k p_i \Big| \leq \Delta(\delta) \colonequals 4 \sqrt{\frac{k^2\ln(2|\mathbf{D_n}|/\delta)}{2N}}.
\end{equation}
\end{restatable}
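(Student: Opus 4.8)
The plan is to reduce the bound on the \emph{random} top-$k$ empirical sum to a deterministic, uniform control of the per-bin deviations $|\hat p_i - p_i|$, and then to exploit that both the empirical top-$k$ index set and the true top-$k$ index set $\{1,\dots,k\}$ are maximizers of linear functionals over $k$-element subsets of $\mathbf{D}_n$. First I would fix a bin $i$ and observe that $c_i$ is a sum of $N$ i.i.d.\ Bernoulli$(p_i)$ random variables (the indicator of each multinomial trial landing in bin $i$), so $\hat p_i = c_i/N$ is an average of $N$ independent $[0,1]$-valued variables with mean $p_i$, and Hoeffding's inequality gives $\Pr[\,|\hat p_i - p_i| \ge t\,] \le 2e^{-2Nt^2}$. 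A union bound over all $|\mathbf{D}_n|$ bins, with $\epsilon \coloneqq \sqrt{\ln(2|\mathbf{D}_n|/\delta)/(2N)}$, then yields that with probability at least $1-\delta$ the event $\mathcal{E} \coloneqq \{\max_i |\hat p_i - p_i| \le \epsilon\}$ holds. Only independence of the \emph{trials} is used here, not independence of the counts, so the multinomial correlation among the $c_i$ causes no difficulty; the rest of the argument is deterministic on $\mathcal{E}$.

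Next I would write $S^* = \{1,\dots,k\}$ for the true top-$k$ index set and $\hat S$ for the empirical top-$k$ index set, so that $\sum_{i=1}^k \hat p_{(i)} = \sum_{i\in\hat S} \hat p_i$ and $\sum_{i=1}^k p_i = \sum_{i\in S^*} p_i$. For the lower bound, since $\hat S$ maximizes the empirical sum over $k$-subsets, on $\mathcal{E}$ we get $\sum_{i\in\hat S}\hat p_i \ge \sum_{i\in S^*}\hat p_i \ge \sum_{i\in S^*}(p_i-\epsilon) = \sum_{i=1}^k p_i - k\epsilon$. For the upper bound, $\sum_{i\in\hat S}\hat p_i \le \sum_{i\in\hat S}(p_i+\epsilon) = \sum_{i\in\hat S}p_i + k\epsilon \le \sum_{i\in S^*}p_i + k\epsilon$, where the final inequality uses that $S^*$ maximizes the \emph{true} probability sum over $k$-subsets. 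Combining the two directions gives $\big|\sum_{i=1}^k \hat p_{(i)} - \sum_{i=1}^k p_i\big| \le k\epsilon = \sqrt{k^2\ln(2|\mathbf{D}_n|/\delta)/(2N)} \le \Delta(\delta)$ on $\mathcal{E}$, which proves the lemma.

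The main obstacle — and essentially the only one — is that $\hat S$ is data-dependent, so one cannot simply apply a scalar concentration inequality to $\sum_{i=1}^k \hat p_{(i)}$ and be finished. The resolution is precisely the two-sided extremality step above: we never need to control which bins enter the empirical top-$k$, only that the empirical top-$k$ sum dominates the empirical sum restricted to the true top-$k$ set, and symmetrically that the true top-$k$ set is optimal under the true probabilities. The factor $4$ in $\Delta(\delta)$ is slack relative to this argument (which yields the constant $1$) and could be tightened; alternatively it leaves room to run the first step with a Bernstein-type bound using the per-bin variance $p_i(1-p_i)$ rather than the worst-case $1/4$, should a sharper, variance-adaptive dependence on $\sum_{i\le k} p_i$ be desired.
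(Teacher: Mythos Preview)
Your proof is correct, and it takes a genuinely different route from the paper's argument. Both proofs share the same concentration step---Hoeffding's inequality applied to each bin followed by a union bound over $|\mathbf{D}_n|$ bins to control $\max_i |\hat p_i - p_i|$---but they diverge in the deterministic step. The paper inserts the intermediate quantity $\sum_{i=1}^k p_{(i)}$ (the \emph{true} probabilities of the \emph{empirically} selected bins), applies the triangle inequality, and then bounds each of the two resulting terms by $\sum_{i\in K\cup\hat K}|\hat p_i - p_i|$ through a set-difference pairing argument; this yields $2\cdot 2k\cdot\max_i|\hat p_i - p_i| = 4k\epsilon$. You instead exploit the two extremality facts directly---$\hat S$ maximizes the empirical sum over $k$-subsets, $S^*$ maximizes the true sum---and obtain $k\epsilon$ in two lines, with no triangle-inequality detour and no pairing. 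Your approach is both shorter and a factor of four sharper; the paper's decomposition is more bookkeeping-heavy and loses constants at each application of the triangle inequality, though it does make the dependence on $K\Delta\hat K$ more explicit. Your remark that the $4$ in $\Delta(\delta)$ is slack is exactly right.
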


Given that the top-$zk$ $n$-grams are selected empirically over $\mathbf{S}$,
it follows from Lemma~\ref{lem:beta_to_beta'} and Lemma~\ref{lem:mult_concentration} that with probability $1-\delta$ these account for $\beta'' \coloneqq \beta' - \Delta(\delta)$ of the underlying probability measure.
Thus, Theorem~\ref{thm:prob_n+1} can be easily adapted for the noise case, as follows.

\begin{thm} \label{thm:bound_with_noise}
    Suppose the observed $n$-grams are ranked by empirical count and the top $k'$ selected.
    Let $\beta'' = \beta - \frac{m}{N-m} - \Delta(\delta)$.
    Then with probability $1-\delta$, equations (\ref{eq:u_bound}) and (\ref{eq:prob_frac_bound}) hold with $\beta'$ replaced with $\beta''$.
\end{thm}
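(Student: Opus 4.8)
The plan is to present Theorem~\ref{thm:bound_with_noise} as an immediate corollary of Theorem~\ref{thm:prob_n+1}, once the right quantity is re-established in the presence of sampling noise. The first thing I would make explicit is that the derivations of~(\ref{eq:u_bound}) and~(\ref{eq:prob_frac_bound}) in Theorem~\ref{thm:prob_n+1} use the parameter $\beta'$ only as a black box: all that is consumed is a lower bound on the fraction of $(n+1)$-grams whose $n$-gram prefix is retained, after which the bound on $u$ follows from the Zipf structure alone (bounding $\sum_{i=1}^{u-1} p_i^{(n+1)} \le 1-\beta'$ and inverting the partial Zipf sum), and parts~(2) and~(3) follow mechanically from that. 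Consequently, any valid lower bound on that prefix-coverage fraction may be substituted for $\beta'$ verbatim, and the whole task reduces to producing such a bound in the noisy regime.

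To produce it, I would chain the two lemmas in the order concentration-then-combinatorics. Let $\hat T$ be the $k'$ $n$-grams of largest count over $\mathbf{S}$, so that their empirical mass is exactly $\sum_{i=1}^{k'}\hat p_{(i)}$. Applying Lemma~\ref{lem:mult_concentration} with this $k'$, on an event of probability at least $1-\delta$ we have $\sum_{i=1}^{k'}\hat p_{(i)} \ge \sum_{i=1}^{k'} p_i - \Delta(\delta) = \beta - \Delta(\delta)$, where $\beta = M_{k'}/M_{|\mathbf{D}_n|}$ is the true Zipf mass of the top-$k'$ $n$-grams (Observation~\ref{obs:zipf}) and $\Delta(\delta)$ is evaluated at $k = k'$. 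This is precisely the step that copes with the fact that $\hat T$ is itself random: Lemma~\ref{lem:mult_concentration} is stated so that no union bound over the $\binom{|\mathbf{D}_n|}{k'}$ candidate top-sets is required. Then Lemma~\ref{lem:beta_to_beta'}, which is a deterministic statement about counts over $\mathbf{S}$ applied to the set $\hat T$, gives that the $(n+1)$-grams prefixed by a member of $\hat T$ account for at least $\big(\sum_{i=1}^{k'}\hat p_{(i)}\big) - \tfrac{m}{N-m} \ge \beta - \Delta(\delta) - \tfrac{m}{N-m} = \beta''$ of the $(n+1)$-grams over $\mathbf{S}$. Substituting $\beta''$ for $\beta'$ in the argument of Theorem~\ref{thm:prob_n+1} then yields~(\ref{eq:u_bound}) and~(\ref{eq:prob_frac_bound}) with $\beta' \mapsto \beta''$ on that same event; since Lemma~\ref{lem:mult_concentration} is the only source of randomness, the overall failure probability is $\delta$.

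The step I expect to be the main obstacle is the bookkeeping about which measure ``coverage'' is taken with respect to. Lemma~\ref{lem:beta_to_beta'} lives entirely on the empirical counts over $\mathbf{S}$; Lemma~\ref{lem:mult_concentration} converts an empirical-mass statement into a true-Zipf-mass statement at the $n$-gram level; and the Zipf inversion inside Theorem~\ref{thm:prob_n+1} is phrased in terms of the $(n+1)$-gram frequency ranking. One must therefore fix a single convention --- most cleanly, reading the $(n+1)$-gram ``frequencies'' as the empirical counts over $\mathbf{S}$, which is exactly what Intergrams operates on --- and verify that under that convention $\beta''$ is a legitimate lower bound on the coverage Theorem~\ref{thm:prob_n+1} consumes, with no further slack introduced. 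If instead one insists on the true $(n+1)$-gram distribution throughout, the remedy is to invoke concentration a second time at the $(n+1)$-gram level, paying an additional $\Delta$-type term; the statement as given takes the first route, and I would flag this modeling choice explicitly in the write-up.
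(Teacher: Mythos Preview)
Your proposal is correct and mirrors the paper's own justification, which is given only as the one-line remark preceding the theorem: chain Lemma~\ref{lem:mult_concentration} and Lemma~\ref{lem:beta_to_beta'} to obtain $\beta''$ as a valid lower bound on the prefix-coverage fraction, then re-run the argument of Theorem~\ref{thm:prob_n+1} with $\beta''$ in place of $\beta'$. Your explicit bookkeeping about which measure (empirical counts over $\mathbf{S}$ versus the underlying Zipf frequencies) the coverage is taken against is a subtlety the paper elides, so your write-up is in fact more careful than the original.
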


By setting $k'=zk$ in each result,
we can handle the oversampling value $z$. While for byte sequences the maximum value for $|\mathbf{D}_n|$ is $256^n$,
the bound in Lemma~\ref{lem:mult_concentration} can be strengthened by filtering out prefixes which are not observed in each prior step. 

\section{Experiments}
\label{sec:experiments}

\begin{table}[b!]
\begin{center}
\adjustbox{max width=0.49\textwidth}{
\begin{tabular}{lccc}
\toprule
{\bf Dataset} & {\bf Type} & {\bf Size} & {\bf $m$} \\
\midrule
EMBER\footnote{Work complete before EMBER2024 was read~\cite{joyce_ember2024_2025}}~\cite{anderson2018ember} & bytes & 1009 GB & 800k \\
C4~\cite{raffel2020exploring} & text & 751 GB & 6.22M \\
1000gp~\cite{clarke20121000} & genomics & 1.4 TB & 1.58M \\
\bottomrule
\end{tabular}
}
\end{center}
\vspace*{-1.0em}
\caption{Datasets used in our experiments.  The {\it C4} and {\it 1000gp} datasets were chunked
such that each sequence contained a few hundred to a few thousand lines of source data.}
\label{tab:datasets}
\end{table}

\begin{table*}[t!]
\begin{center}
\begin{tabular}{lccccccccc}
\toprule
{\bf Algorithm} & \multicolumn{3}{c}{{\bf EMBER} ($k = 100$k)} & \multicolumn{3}{c}{{\bf c4} ($k = 10$k)} & \multicolumn{3}{c}{{\bf 1000gp} ($k = 10$k)} \\
 & Runtime & Speedup & Jaccard & Runtime & Speedup & Jaccard & Runtime & Speedup & Jaccard \\
\midrule
{\small \tt hg-vanilla}            & 9078.5s & --          & --   & 39787.1s & --          & --        & 10042.0s & --          & --        \\
{\small \tt hg-cuckoo}             & 8987.9s & 1.01x       & --   & 37506.8s & 1.06x       & --        & 9682.3s  & 1.04x       & --        \\
{\small \tt hg-largepage}          & 8394.7s & 1.08x       & --   & 37214.8s & 1.06x       & --        & 8640.6s  & 1.16x       & --        \\
{\small \tt hg-trie}               & 8286.5s & 1.10x       & --   & 33026.1s & 1.20x       & --        & 8130.2s  & 1.24x       & --        \\
{\small \tt hg-fast}               & 7162.6s & 1.27x       & --   & 30811.2s & 1.29x       & --        & 7061.5s  & 1.42x       & --        \\
{\small \tt intergrams}, $z = 1$   & 1413.7s & {\bf 6.42x} & 0.71 & 1183.9s  & {\bf 33.6x} & 0.91      & 1215.7s  & {\bf 8.26x} & {\bf 1.0} \\
{\small \tt intergrams}, $z = 1.5$ & 1458.8s & {\bf 6.22x} & 0.91 & 1373.7s  & {\bf 29.0x} & {\bf 1.0} & 1152.9s  & {\bf 8.71x} & {\bf 1.0} \\
{\small \tt intergrams}, $z = 2$   & 1771.4s & {\bf 5.13x} & 0.97 & 1501.9s  & {\bf 26.5x} & {\bf 1.0} & 1144.6s  & {\bf 8.77x} & {\bf 1.0} \\
\bottomrule
\end{tabular}
\end{center}
\vspace*{-1em}
\caption{Runtime results for Intergrams and variants of the hash-gramming approach.
In every case the Intergrams algorithm is able to provide significant speedup (up to 30x!),
while still recovering nearly every true top-$k$ $n$-gram (sometimes exactly!).}
\label{tab:count_one_experiments}
\end{table*}

We now demonstrate the superiority of the Intergrams algorithm
as compared to other approaches,
validating each of our hardware-based observations along the way.
As hash-gramming is orders of magnitude faster than any other algorithms~\cite{raff2018hash},
we compare only against hash-gramming with
the `small tricks' discussed earlier. 

Our experimental system is a very powerful single system
equipped with 4 32-core Xeon E7-8867 v3 processors,
for a total of 128 cores.
Each core has 32 KB L1 cache and 512 KB L2 cache.
Each processor shares its 45 MB L3 cache across cores,
giving a total of 180 MB of L3 cache.
The system is equipped with 2 TB of DDR3-1333 RAM,
with a disk controller capable of serving data from a RAID at $\sim\!5$ GB/s.

We implemented the Intergrams algorithm carefully in C++,
using the Armadillo library for timing~\cite{sanderson2016armadillo}.
We process sequences in parallel using thread pairs:
one thread reads the sequence from disk
(this thread is mostly idle, waiting on I/O)
while another thread processes data that the other has buffered.
This thread pair strategy is necessary because Intergrams
is bottlenecked by disk;
thus, a thread entirely for I/O ensures
maximum disk throughput.

We also implemented hash-gramming carefully in C++;
however, due to its poor memory access patterns,
it cannot saturate disk bandwidth
and thus each sequence is assigned to its own thread,
which performs both I/O and the hash-gramming computations.
We applied the performance tricks detailed earlier
to hash-gramming, producing these variants:

\begin{itemize}
    \item {\small \tt hg-vanilla} is hash-gramming implemented as written.
    \item {\small \tt hg-largepage} uses 2 MB page sizes to reduce the TLB misses to the global counts array.
    \item {\small \tt hg-prefetch} uses software prefetching to mask the latency of fetching the global counts array from RAM.
    \item {\small \tt hg-cuckoo} uses cuckoo hashing~\citep{pagh2004cuckoo} to minimize the updates to the global counts array.
    \item {\small \tt hg-fast} uses our tuned trie structure for the second pass, as opposed to the standard C++ {\tt unordered\_map}.
\end{itemize}

The optimizations in each variation are cumulative:
that is, {\small \tt hg-cuckoo} also contains the optimizations from {\small \tt hg-prefetch} and {\small \tt hg-largepage},
and so forth.
As we designed Intergrams with all these optimizations in mind,
ablating them individually does not make sense.
Instead, to control the accuracy of Intergrams,
we take $z \in \{ 1, 1.5, 2 \}$.

For datasets, we used three large datasets from different application areas,
detailed in Table~\ref{tab:datasets}.
Our implementations were tuned to byte data,
and so we processed the text and genomic data at the character (byte) level.
Note that for both plaintext (ASCII) and genomics data,
the size of $|D|$ is smaller than $256^n$,
meaning that additional optimization implementations could be made
(see the trie discussion).

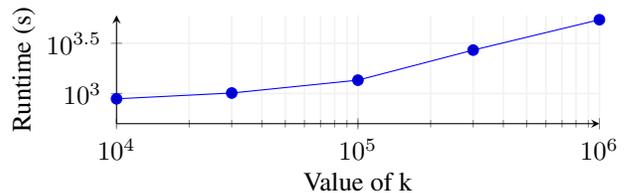
\begin{figure}[!t]
\begin{tikzpicture}
\begin{axis}[
      axis lines=left,
      width=0.45\textwidth,
      height=0.17\textwidth,
      ymin=500,ymax=6000,
      xmin=10000, xmax=1000000,
      xlabel={Value of k},
      ylabel={Runtime (s)},
      xmode=log,
      log basis x={10},
      ymode=log,
      log basis y={10},
      grid=both,grid style={line width=0.6pt, draw=gray!10}
]
    \addplot table [x=k, y=time, col sep=comma, color=blue] {data/k_sweep.csv};
\end{axis}
\end{tikzpicture}
\vspace*{-1.8em}
\caption{Runtime results for a sweep of $k$ on the EMBER dataset.  Y-axis is logarithmic.}
\label{fig:k_sweep_ember}
\end{figure}

\begin{table}[!t]
\begin{center}
\small
\begin{tabular}{lcc}
\toprule
{\bf Step} & {\bf Runtime} & {\bf Throughput} \\
\midrule
$3$-gram pass & 228.2s & 4.42 GB/s \\
top-$zk$ $3$-grams/trie building & 3.25s & -- \\
$4$-gram pass & 205.95s & 4.90 GB/s \\
top-$zk$ $4$-grams/trie building & 0.97s & -- \\
$5$-gram pass & 189.37s & 5.33 GB/s \\
top-$zk$ $5$-grams/trie building & 1.16s & -- \\
$6$-gram pass & 190.57s & 5.29 GB/s \\
top-$k$ $6$-grams & 0.29s & -- \\
\bottomrule
\end{tabular}
\end{center}
\vspace*{-0.65em}
\caption{Runtime breakdown for each step of the Intergrams algorithm
on EMBER with $k$ set to $10$k and $z = 1.5$.}
\label{tab:breakdown}
\end{table}

Table~\ref{tab:count_one_experiments} contains runtime results for each of the datasets.
Intergrams {\em significantly} outperforms all variants of hash-gramming,
even the most optimized.
In the best cases, Intergrams provides 20--30x speedup
due to its hardware-informed design.
In addition, even with $z = 1$, it returns $n$-grams that match those
returned by hash-gramming.

Next, we consider the effect of increasing $k$ on the runtime.
Figure~\ref{fig:k_sweep_ember} shows each algorithm's runtime on the EMBER dataset
when $k$ is swept from $1$k to $1$M.
Jaccard similarities were comparable to what is seen in Table~\ref{tab:count_one_experiments}.

Lastly,
to show the relative cost of each step of the Intergrams algorithm,
Table~\ref{tab:breakdown} shows the breakdown of each step of the Intergrams algorithm on
each of the three datasets.
Since no trie lookup is necessary for the $3$-gram pass,
that step is the most efficient and can maximize disk throughput.
The top-$zk$ steps and trie building steps are negligible compared to the time it takes to pass over the data.
In later passes of the algorithm,
the trie gets more selective (as compared to $|D|$)
and this phenomenon accounts for the $5$-gram and $6$-gram passes being faster than the $4$-gram pass.

\section{Conclusion}

We introduced the Intergrams algorithm,
a fast algorithm for computing the top-$k$ $n$-grams
on very large datasets.
Its empirical performance beats all known other algorithms,
and it has favorable theoretical guarantees.
Code can be found at {\small \tt github.com/rcurtin/Intergrams}.
We are exploring further improvement by
exploiting the Zipf distribution to skip data
for a slight loss in accuracy,
as in~\citet{raff2025zipf}.

\bibliography{refs}

\newpage
\appendix

\section{Proofs}

\betalemma*
\begin{proof}
    Take the top $k$ such $n$-grams and denote their total occurrences over $\mathbf{S}$ as $c_{k}$. Now consider each sequence $s_i$ separately.
If the final $n$-gram is one of the top $k$,
then that one does not prefix an $(n+1)$-gram.
Otherwise each of the top $k$ $n$-gram occurrences is a prefix.
Therefore, the total occurrences of $
(n+1)$-grams prefixed by a top $k$ $n$-gram is between $c_{k}-m$ and $c_{k}$.

By this reasoning, we observe that if the top $k$ $n$-grams account for some fraction $\beta \coloneqq c_{k}/N $ of all the $n$-grams over $\mathbf{S}$,
then these $n$-grams prefix at least $(c_{k} - m) / (N - m)$ of the total mass of observed $(n+1)$-grams. The conclusion follows immediately.
\end{proof}

\probthm*
\begin{proof}
    (1) Assume there are $|\mathbf{D}_n|$ and $|\mathbf{D}_{n+1}|$ unique $n$-grams and $(n+1)$-grams respectively.
    Assuming the vocabulary size is 256,
    the observed top $k$ form the prefixes of at most $256k$ candidate $(n+1)$-grams.
    From Obs.~\ref{obs:zipf} and assuming no sampling noise, we know that $\beta \coloneqq M_k / M_{|\mathbf{D}|}$.
    The mass over $n$-grams transfers to the $(n+1)$-grams as described above,
    so then
$$(1- M_u/M_{|\mathbf{D}_{n+1}|}) -(1-M_{u+256k} /M_{|\mathbf{D}_{n+1}|}) \geq \beta' $$

To simplify the math we relax the constraint that there are $256k$ candidates, which means our estimate of $u$ will be bounded by:
\begin{equation} \label{eq:beta_transfer}
    M_u/M_{|\mathbf{D}_{n+1}|} \leq 1-\beta'
\end{equation}

That is, we want to find an upper bound on $u$ such that the expression holds.
To make the expression tractable, we lower bound it with the integral approximation of $M_k$ given by Lemma~\ref{lem:integral_bound} to get:
$$ \frac{(u + 1)^{1-a} - 1}{|\mathbf{D}_{n+1}|^{1-a} - a} \leq 1 - \beta'$$
When $a < 1$ the denominator is positive. Multiplying by the denominator and rearranging gives the result
$$u \leq \Big(\big(|\mathbf{D}_{n+1}|^{1-a} - a\big)(1-\beta') + 1\Big)^{\frac{1}{1-a}} - 1$$

When $a > 1$, both numerator and denominator are negative. Then multiplication by the denominator reverses the inequality.
$$ (u+1)^{1-a} \geq (1-\beta') (|\mathbf{D}_{n+1}|^{1-a} - a)+ 1$$
Since $1-a<0$, $(u+1)^{1-a} = 1/(u+1)^{a-1}$.
Thus we can take the reciprocal of both sides and rearrange to arrive at the same result.

(2) To show this, we observe that the worst case shown in Eq.~\ref{eq:beta_transfer} actually assumes that all $(n+1)$-grams from the $u$-th onward are recalled. In such a case, the 2nd most frequent $(n+1)$-gram would be position $u+1$, and so on. In reality more favorable $(n+1)$-grams will be found, but that only decreases the index value $u+j-1$, so the upper bound holds.

(3) Finally, we want to bound the actual probability mass of the top $k$ $(n+1)$-grams which are captured by the intermediate step.
The cumulative probability mass up to $u$ and the cutoff $k$ are $\sum_{i=1}^u p_i$ and $\sum_{i=1}^k p_i$, respectively.
We thus define the quantity of interest as
\begin{align}
    \frac{\sum_{i=1}^k p_i - \sum_{i=1}^u p_i}{ \sum_{i=1}^k p_i } &= 1 - \frac{\sum_{i=1}^u p_i}{\sum_{i=1}^k p_i} \\
    &= 1 - \frac{M_u}{M_k}\\
    &\geq 1 - \frac{u^{1-a} - a}{(k+1)^{1-a} - 1}
\end{align}
We can further substitute the bound for $u$ and obtain
$$1 - \frac{u^{1-a} - a}{(k+1)^{1-a} - 1} \geq 1 - \frac{(|\mathbf{D}_{n+1}|^{1-a} - a)(1-\beta') - a}{(k + 1)^{1-a} - 1}$$    
\end{proof}

\concentrationlem*
\begin{proof}
    If the top $k$ $n$-grams by empirical count are the same as the actual top $k$, then standard bounds can be applied. However, in practice, the actual $n$-grams selected will differ.
    So there are two sources of randomness: (1) the deviation of empirical counts around the mean among the selected bins, and (2) the deviation of the selected high-frequency bins around the true high-frequency bins.
    Using this information we can decompose the sum as
    \begin{align*}
        \bigg| \sum_{i=1}^k \hat{p}_{(i)} &- \sum_{i=1}^k p_i \bigg| \\
        &= \bigg| \sum_{i=1}^k \hat{p}_{(i)}  + \sum_{i=1}^k p_{(i)} - \sum_{i=1}^k p_{(i)} - \sum_{i=1}^k p_i  \bigg|\\
        &\leq \bigg| \sum_{i=1}^k \hat{p}_{(i)}  - \sum_{i=1}^k p_{(i)} \bigg| + \bigg| \sum_{i=1}^k p_i  - \sum_{i=1}^k p_{(i)} \bigg|
    \end{align*}
    
    Focusing on the second term,
    we observe the summations are over two sets of indices $K$ and $\hat{K}$,
    consisting of the top $k$ actual and top $k$ empirical $n$-grams, respectively.
    For any $n$-grams in the intersection of the sets, the contribution to the sum is 0.
    Thus any remaining terms are in the set difference $K \Delta \hat{K}$.
    We pair these remaining terms from each set in descending order, denoting the set of pairs as $J \coloneqq (K\setminus \hat{K}) \times  (\hat{K} \setminus K)$:

    \begin{align}
        \bigg| \sum_{i=1}^k p_i  - \sum_{i=1}^k p_{(i)} \bigg|  &= \bigg| \sum_{i\in K} p_i  - \sum_{j\in \hat{K}} p_j \bigg| \\
        &= \bigg| \sum_{i \in K\setminus\hat{K}} p_i  - \sum_{j\in \hat{K}\setminus K} p_j\bigg|\\
        &= \bigg| \sum_{(i, j)\in J} (p_i - p_j) \bigg|
    \end{align}

    Now we can use a key observation: for each pair $(i, j)\in J$, $p_i \geq p_j$ but $\hat{p}_i \leq \hat{p}_j$.
    This can be seen as the first index consists of true top $k$ probabilities that did not make the top $k$ empirical probabilities,
    and vice versa for the second index.
    Furthermore, there are no overlapping $n$-grams.    
    Then
    \begin{align}
        \bigg| \sum_{(i, j)\in J} (p_i - p_j) \bigg| &= \bigg| \sum_{(i, j)\in J} (p_i - p_{j} + \hat{p}_i - \hat{p}_i)\bigg|\\
        &\leq \bigg| \sum_{(i, j)\in J} (p_i - p_{j} + \hat{p}_{j} - \hat{p}_i)\bigg|\\
        &=  \bigg| \sum_{i \in K \setminus \hat{K}} (p_i - \hat{p}_i) + \sum_{j\in \hat{K} \setminus K} (p_{j} - \hat{p}_{j} )\bigg| \\
        &\leq  \sum_{i \in K \setminus \hat{K}} |p_i - \hat{p}_i| + \sum_{j\in \hat{K} \setminus K} |p_{j} - \hat{p}_{j} | \\
        &\leq \sum_{i \in K\cup \hat{K}} |p_i - \hat{p}_i| \end{align}

Now we return to the first term $\big| \sum_{i=1}^k \hat{p}_{(i)}  - \sum_{i=1}^k p_{(i)} \big|$.
We can also rewrite this as
\begin{align}
    \bigg| \sum_{i=1}^k \hat{p}_{(i)}  - \sum_{i=1}^k p_{(i)} \bigg| &= \bigg| \sum_{i\in \hat{K}} (\hat{p}_{i}  - p_{i}) \bigg|\\
    &\leq \sum_{i\in{\hat{K}}} |\hat{p}_i - p_i|\\
    &\leq \sum_{i\in K \cup \hat{K}} |\hat{p}_i - p_i|
\end{align}
So now we have
$$\bigg| \sum_{i=1}^k \hat{p}_{(i)}  - \sum_{i=1}^k p_{i} \bigg| \leq 2 \sum_{i\in K \cup \hat{K}} |\hat{p}_i - p_i|$$

While it may appear that standard concentration bounds can apply,
the issue is that the summation is over a random set $\hat{K}$, which makes the quantities dependent.
In particular, conditioning on the fact that $\hat{K}$ are the largest empirical counts, the deviations are likely to be particularly large.
To address this, we work with the maximal deviation
$$\sum_{i\in K \cup \hat{K}} |\hat{p}_i - p_i| \leq 2k \max_{i\in 1 \ldots |\mathbf{D}|} |\hat{p}_i - p_i|$$
since each term is bounded by the maximal deviation and there are at most $2k$ terms in the summation.

For any given fixed $n$-gram $x_i$, Hoeffding's inequality gives $P(|\hat{p}_i - p_i| \geq \epsilon) \leq 2 \exp( -2N \epsilon^2 )$.
By the union bound, then
$$P(\max_i |\hat{p}_i - p_i| \geq \epsilon) \leq 2|\mathbf{D}|\exp(-2N\epsilon^2)$$
Setting the right hand side to $\delta$, then with probability at least $1-\delta$, $$\max_i |\hat{p}_i - p_i| \leq \sqrt{\frac{\ln(2|\mathbf{D}|/\delta)}{2N}}$$

Combining these together, we conclude that
$$ \bigg| \sum_{i=1}^k \hat{p}_{(i)} - \sum_{i=1}^k p_i \bigg| \leq 4k \sqrt{\frac{\ln(2|\mathbf{D}|/\delta)}{2N}}$$

\end{proof}

\begin{lem} \label{lem:integral_bound}
    If $a \neq 1$, then
    $$ \frac{(k + 1)^{1-a} - 1}{1-a} \leq M_k \leq \frac{k^{{1-a}} - a}{1-a}$$
    and if $a = 1$, then
    $$ \ln(k + 1) \leq M_k \leq \ln(k) + 1  $$
\end{lem}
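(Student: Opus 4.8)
The plan is to apply the classical integral test, using only the fact that the Zipf parameter satisfies $a > 0$, so that $f(x) \coloneqq x^{-a}$ is positive and strictly decreasing on $[1, \infty)$. From monotonicity, for every integer $i \geq 1$ we have $\int_i^{i+1} f(x)\,dx \leq f(i)$, and for every integer $i \geq 2$ we have $f(i) \leq \int_{i-1}^i f(x)\,dx$. These two pointwise bounds are the only ingredients; everything else is summation and evaluation of an elementary integral.

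For the lower bound, I would sum the first inequality over $i = 1, \ldots, k$, so that the integrals telescope: $M_k = \sum_{i=1}^k f(i) \geq \sum_{i=1}^k \int_i^{i+1} f(x)\,dx = \int_1^{k+1} f(x)\,dx$. When $a \neq 1$ this integral evaluates to $\frac{(k+1)^{1-a} - 1}{1-a}$, which is the claimed lower bound; this holds simultaneously for $a < 1$ and $a > 1$ because the inequality comes from comparing areas under a nonnegative function, so the sign of $1-a$ never needs to be tracked. When $a = 1$ the integral is $\ln(k+1)$, giving the second claim.

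For the upper bound the one point of care is that $\int_0^1 f(x)\,dx$ diverges when $a \geq 1$, so the second pointwise inequality cannot be applied at $i = 1$. I would therefore peel off the first term: $M_k = f(1) + \sum_{i=2}^k f(i) \leq 1 + \sum_{i=2}^k \int_{i-1}^i f(x)\,dx = 1 + \int_1^k f(x)\,dx$. For $a \neq 1$ this equals $1 + \frac{k^{1-a} - 1}{1-a} = \frac{k^{1-a} - a}{1-a}$, and for $a = 1$ it equals $1 + \ln k = \ln(k) + 1$, matching the statement exactly.

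There is no real obstacle here: the argument is a textbook integral comparison, and the only thing worth stating explicitly is the endpoint handling in the upper bound (peeling off $f(1) = 1$) and the observation that the evaluated antiderivative $\frac{x^{1-a}}{1-a}$ produces the correct positive quantities with the correct inequality directions for $a > 1$ automatically, since those directions are inherited from the area interpretation rather than from any algebraic manipulation of the inequality.
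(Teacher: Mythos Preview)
Your argument is correct and is essentially the same integral-comparison proof the paper gives: bound $M_k$ below by $\int_1^{k+1} x^{-a}\,dx$ and above by $1 + \int_1^k x^{-a}\,dx$, then evaluate. If anything, your write-up is slightly more careful about why the first term must be peeled off and why the inequality directions survive the sign of $1-a$.
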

\begin{proof}
    $M_k$ is a decreasing function of $k$.
    Considering $M_k$ as a set of rectangles with width 1 and height $1/x^a$,
    we see immediately the integral bound
    $$M_k \geq \int_{1}^{k+1} \frac{1}{x^a} dx$$

    Shifting the rectangles one unit to the left, we then see
    $$\sum_{x=2}^k \frac{1}{x^a} \leq \int_1^k \frac{1}{x^a} dx$$

    Combining we get
    $$\int_1^{k+1} \frac{1}{x^a} dx \leq \sum_{x=1}^k \frac{1}{x^a} \leq \int_1^k \frac{1}{x^a}dx + 1$$

    Solving the definite integrals gives the final result.
\end{proof}

\end{document}